%% file: root.tex
\documentclass[letterpaper, 10 pt, conference]{ieeeconf}
\IEEEoverridecommandlockouts

\usepackage{amsmath, amssymb, amsfonts}
\usepackage{amsthm}
\usepackage{graphicx}
\usepackage{booktabs}
\usepackage{multirow}
\usepackage{makecell}
\usepackage{bm}
\usepackage{xcolor}
\usepackage{hyperref}
\usepackage{algorithm}
\usepackage{algpseudocode}
\let\labelindent\relax

\usepackage{enumitem}

\input{Macro}

\title{\LARGE \bf
Characterizing Identifiability and Generalization for Inverse Receding-Horizon Linear-Quadratic Regulator Problems
}

\author{Zhiyuan Jin$^{1}$, Jingqi Li$^{2}$ and David Fridovich-Keil$^{3}$
\thanks{$^{1}$Zhiyuan Jin is with the Department of Mathematics, University of Wisconsin–Madison, Madison, WI 53706, USA. {\tt\small zjin277@wisc.edu}}%
\thanks{$^{2}$Jingqi Li is with the Oden Institute for Computational Engineering and Sciences, The University of Texas at Austin, Austin, TX 78712, USA.{\tt\small  jingqi.li@austin.utexas.edu}}
\thanks{$^{3}$David Fridovich-Keil is with the Oden Institute for Computational Engineering and
Sciences, and the Department of Aerospace Engineering and Engineering Mechanics, The University of Texas at Austin, Austin, TX 78712, USA. {\tt\small dfk@utexas.edu}}
\thanks{This research was sponsored by the Army Research Laboratory under Cooperative Agreement W911NF-25-2-0021, and by the National Science Foundation under Grant 2336840. Jingqi Li is supported by the Peter O'Donnell Jr. Postdoctoral Fellowship. }
}

\begin{document}

\maketitle
\thispagestyle{empty}



\input{sections/abstract}
\input{sections/reviesion2_Introduction}
\input{sections/RelatedWorks}
\input{sections/revision2_ProblemFormulation}
\input{sections/revision2_Identifiability}
\input{sections/ObservationNumbers}
\input{sections/revision2_Generalization}

\input{sections/revision1_NumericalResults}

\input{sections/conclusion}
\bibliographystyle{IEEEtran}
\bibliography{references}
\end{document}

%% file: Macro.tex
\newcommand{\R}{\mathbb{R}}

\newcommand{\trans}{\mathsf{T}}
\newcommand{\Y}{\mathcal Y}

\newcommand{\rhcaction}[2]{u^0\!\left(#1;#2\right)}
\newcommand{\affhull}[1]{\operatorname{aff}\!\left\{#1\right\}}

\newtheorem{proposition}{Proposition}

\theoremstyle{definition}
\newtheorem{definition}{Definition}
\newtheorem{remark}{Remark}

\newtheorem{example}{Example}

\newcommand{\dataset}{\mathcal{Y}_K^\star}
\newcommand{\observationmap}{\mathcal{O}}
\newcommand{\parameterspace}{\Theta}
\newcommand{\parameter}{\theta}

%% file: sections/abstract.tex
\begin{abstract}
We consider the problem of objective inference in the context of receding-horizon linear-quadratic regulator (LQR).
In this setting, we are given sequential state-action observations, where each observed action is the first control of a newly solved finite-horizon LQR problem. 
We characterize when the objective of that problem is uniquely identifiable from these observations and when additional observations provide no new information about the objective. 
We then analyze action prediction at unseen states and show that all objectives reproducing the observed actions yield identical actions throughout the affine hull of the observed states; outside this hull, we derive an upper bound on the prediction error. 
Additionally, we show that, when only the linear objective terms are unknown, exact prediction holds at every state. 
Finally, numerical results show that, even under stochastic observation noise, re-optimizing an inferred objective enables accurate action prediction at unseen states across different planning horizons.
\end{abstract}


%% file: sections/reviesion2_Introduction.tex
\section{Introduction}
Receding-horizon control (RHC) is widely used for sequential decision-making because repeated replanning allows a controller to adapt to evolving states and environments \cite{mattingley2011receding}. Inferring the objective underlying these decisions provides an interpretable description of the preferences and tradeoffs that shape the controller’s behavior \cite{zhang2024inverse, 11312871}. Beyond interpretation, the inferred objective can be re-optimized to predict actions at new states or under different planning horizons.

Inverse optimal control (IOC) provides a general framework for recovering such objectives from observed decisions \cite{Azar2020FromIO}. For theoretical analysis, the linear-quadratic regulator (LQR) setting is particularly tractable because the dependence of optimal control on the objective parameters can be characterized explicitly. Existing inverse LQR results establish conditions for objective recovery from feedback laws~\cite{Kalman1964} or complete finite-horizon optimal trajectories \cite{ZhangEtAl2019}.

Receding-horizon control, however, produces a fundamentally different form of data. At each state, a finite-horizon problem is solved, but only the first action is applied before replanning. Consequently, sequential observations are drawn from a sequence of distinct finite-horizon optimization problems rather than from a single optimal trajectory. Existing inverse receding-horizon methods \cite{zhang2024inverse,11312871} can recover objectives that reproduce the observed first actions, but they do not reveal whether the underlying objective is uniquely identifiable. Nor do they determine whether inferred objectives can predict ground-truth actions at unseen states.

We study these questions in the inverse receding-horizon LQR setting. Our main results are as follows.

 1) \textbf{Identifiability from observations.}
We characterize when the objective is uniquely identifiable from observed first actions and derive necessary dimensional conditions for unique recovery. When only the linear objective terms are unknown, we derive identifiability conditions and characterize the identifiable parameter subspace.

 2) \textbf{Information gain from additional observations.}
We characterize when additional observations provide no new information about the objective. When quadratic objective terms are unknown, observations within the affine hull of previously observed states add no information. When only the linear objective terms are unknown, a single exact observation contains all available objective information, so subsequent observations do not improve inference.

 3) \textbf{Generalization to unseen states.}
We show that all objectives reproducing the observed first actions induce the same first action throughout the affine hull of the observed states. Outside this hull, we derive a bound on the prediction error at unseen states. When only the linear objective terms are unknown, exact prediction holds at every state. Numerically, re-optimizing the inferred objective yields accurate unseen-state prediction across different planning horizons, even under stochastic observation noise.

%% file: sections/RelatedWorks.tex
\section{Related Work}
\noindent\textbf{Inverse Optimal Control.} IOC problems aim to infer objective function parameters that explain observed controls~\cite{Kalman1964}.
The literature on inverse LQR specifically studies objective identifiability from stationary feedback policies~\cite{PriessEtAl2015,ZhangRingh2024} or finite-horizon trajectories~\cite{ZhangEtAl2019,QuHeDuan2025}.
However, in practice, receding-horizon control reveal only the first action of each replanned trajectory.
In contrast to prior works on inverse RHC~\cite{RamadanEtAl2019,XuGaoHsu2022}, we focus on the LQR case, whose structure allows us to characterize the set of recoverable objective parameters, the observations that provide new information about them, and the action-prediction error at unseen states.

\noindent\textbf{Data-Driven Control.} Data-driven control constructs controllers from trajectory data through system identification and synthesis~\cite{Simchowitz2018,Dean2020,Mania2019}, direct feedback design~\cite{DePersisTesi2020}, or data-enabled predictive control~\cite{Coulson2019}, while data-informativity analyses distinguish data sufficient for control from that required for unique system identification~\cite{VanWaarde2020}. We complement these works by retaining the optimization structure and inferring the objective underlying receding-horizon LQR behavior. From observed first actions, we characterize objective identifiability, when additional observations provide new information, and when inferred objectives correctly predict actions at unseen states.

\noindent\textbf{Inverse Reinforcement Learning.} 
Inverse reinforcement learning (IRL) infers objectives from demonstrations~\cite{NgRussell2000,AbbeelNg2004}, with probabilistic formulations for imperfect behavior~\cite{Ziebart2008} and expressive neural-network representations of objective models for complex tasks~\cite{Finn2016}. While these methods address increasingly complex settings, exact analysis of objective identifiability, of the information gained from additional observations, and of action-prediction generalization at unseen states remains difficult. We study a tractable inverse receding-horizon LQR setting that permits explicit characterization of all three problems. 

%% file: sections/revision2_ProblemFormulation.tex
\section{Preliminaries and Problem Formulation}
\label{sec:problem_formulation}

\subsection{System Setup and Receding-Horizon Control}
Consider the discrete-time linear system
\begin{equation}
    x_{t+1}=Ax_t+Bu_t,
    \label{eq:system_dynamics}
\end{equation}
where $x_t\in\R^n$ is the state, $u_t\in\R^m$ is the control input,
and $A\in\R^{n\times n}$ and $B\in\R^{n\times m}$ are known and fixed.

The stage cost is
$    \ell(x_{t+1},u_t;\parameter)
    ={}\frac12 x_{t+1}^\trans Q(\parameter)x_{t+1}
    +\frac12 u_t^\trans R(\parameter)u_t
    +q(\parameter)^\trans x_{t+1}
    +r(\parameter)^\trans u_t$,
where $Q(\parameter)\in\R^{n\times n}$ and $R(\parameter)\in\R^{m\times m}$
are symmetric, $q(\parameter)\in\R^n$, and $r(\parameter)\in\R^m$. Since a common positive
scaling of all coefficients leaves the optimal control unchanged, we remove
this ambiguity by imposing $\operatorname{tr}\big(R(\parameter)\big)=1$.
The parameter $\parameter\in\R^p$ then collects the $p$
independent unknown coefficients, and each $\parameter$ uniquely specifies
the cost tuple $\big(Q(\parameter),R(\parameter),q(\parameter),r(\parameter)\big)$.
The feasible parameter set is
$\Theta:=\{\parameter\in\R^p: Q(\parameter)\succeq0,\ R(\parameter)\succ0\}$.



For a fixed and known planning horizon $T\ge1$ and current state $x$, consider the
finite-horizon optimal control problem
\begin{equation}
\begin{aligned}
    \min_{u_0,\ldots,u_{T-1}}\quad
        &\sum_{\tau=0}^{T-1}\ell(x_{\tau+1},u_\tau;\theta)\\
    \text{subject to}\quad
        &x_0=x,\\
        &x_{\tau+1}=Ax_\tau+Bu_\tau, \ \tau=0,\ldots,T-1.
\end{aligned}
    \label{eq:forward_rhc_problem}
\end{equation}
Here $\tau$ indexes a predicted trajectory within a planning problem,
whereas $t$ indexes the actual closed-loop evolution. Since $R\succ 0$, the optimal solution to \eqref{eq:forward_rhc_problem} is unique. Let
$U^\star(x;\theta):=[u_0^{\star\trans},\ldots,u_{T-1}^{\star\trans}]^\trans$ denote
the unique optimal control sequence, and define its first action as
$\rhcaction{x}{\theta}:=u_0^\star$.

RHC solves
\eqref{eq:forward_rhc_problem} at each time $t$ with initial state
$x_t$, applies only $\rhcaction{x_t}{\theta}$, and then solves a
new horizon-$T$ problem at the next state. This leads to a
closed-loop system
\begin{equation}
    x_{t+1}=Ax_t+B\rhcaction{x_t}{\theta}.
    \label{eq:rhc_closed_loop}
\end{equation}

\subsection{Problem Formulation: Inverse Receding-Horizon LQR}
The inverse problem seeks objective parameters whose corresponding optimal control matches the observed data. The data-generating controller, referred to as the expert, uses a fixed
parameter $\theta^\star\in\Theta$ and generates demonstrations by
repeatedly solving \eqref{eq:forward_rhc_problem} and applying its
first action:
\begin{equation}
\begin{aligned}
    u_t^{0\star} =\rhcaction{x_t}{\theta^\star},\  \ \ 
    x_{t+1} =Ax_t+Bu_t^{0\star}.
\end{aligned}
    \label{eq:expert_demonstration}
\end{equation}
The available data is
$\dataset=\{(x_t,u_t^{0\star})\}_{t=0}^{K-1}$, where $K\ge1$
is the number of observed state-action pairs. Only the applied
first actions are observed; the remaining actions in each expert
plan are unobserved.

Given $\Y_K^\star$, the inverse receding-horizon LQR problem is
\begin{equation}
    \min_{\hat\theta\in\Theta}\;
    \mathcal L_K(\hat\theta)
    :=\sum_{t=0}^{K-1}
    \|\rhcaction{x_t}{\hat\theta}-u_t^{0\star}\|_2^2.
    \label{eq:inverse_rhc_problem}
\end{equation}

Previous works have focused on computational methods for finding solutions minimizing the loss function \eqref{eq:inverse_rhc_problem} for inverse RHC problems \cite{zhang2024inverse, 11312871}. However, achieving zero loss guarantees only that the inferred parameter reproduces the observed actions at the demonstrated states; a priori, it neither ensures recovery of the ground-truth objective parameter nor provides any formal analysis of the error in predicting actions at unseen states. We therefore introduce \emph{identifiability} and \emph{generalizability} to characterize when the observations uniquely determine the true parameter and whether an inferred parameter can predict the expert's actions at unobserved states even when recovery is not unique.

First, different objective parameters may induce the same observed first actions. Identifiability concerns whether the observations uniquely determine the objective parameter.

\begin{definition}[Identifiability]
\label{identifiability}
Given a dataset $\dataset$, we define the observation map from objective parameters and observed states to observed controls $\observationmap: \parameterspace \times \mathbb{R}^{Kn} \to \mathbb{R}^{Km} $. A parameter $\theta^\star\in\Theta$ is \emph{identifiable} if $\{\theta\in\Theta: \mathcal O(\theta, \{x_t\}_{t=0}^{K-1})=\mathcal O(\theta^\star,\{x_t\}_{t=0}^{K-1})\} =\{\theta^\star\}$, equivalently, if $\observationmap$ is injective in~$\parameter$. 
\end{definition}

Moreover, accurate action prediction may still be possible even when the objective parameter is not uniquely identifiable. Generalizability concerns whether the inferred objective reproduces the expert's first action at an unseen query state under the same dynamics and planning horizon.
\begin{definition}[Generalizability]
Let $\theta^\star$ denote the ground-truth objective parameter.
A parameter $\theta$ is \emph{generalizable} at state $\bar x$ if $\rhcaction{\bar x}{\theta}=\rhcaction{\bar x}{\theta^\star}$.
\end{definition}
When generalizability cannot be guaranteed, we define the \emph{generalization error} as the action-prediction error at an unseen query state, i.e., $\|\rhcaction{\bar x}{\theta}-\rhcaction{\bar x}{\theta^\star}\|_2$.


\subsection{Lifted Optimality Representation}
The inverse problem in \eqref{eq:inverse_rhc_problem} expresses action matching through the solution of a finite-horizon optimization problem. To analyze which cost parameters can reproduce the observations, we first derive explicit optimality conditions for the planned control sequence. These conditions will allow the unobserved planned actions to be treated as latent variables. We then extract the first action and show
that it depends affinely on the current state, providing the structure used to analyze prediction at unseen states. 

Define the \emph{stacked state and
control trajectories} as
$
    X:=[x_1^\trans, \ldots,  x_T^\trans]^\trans,\ 
    U:=[u_0^\trans, \ldots,  u_{T-1}^\trans]^\trans
$. 
The dynamics admit the lifted representation $X=\mathcal A_Tx+\mathcal B_TU$, where 
\[
    \mathcal A_T=
    \begin{bmatrix}A\\A^2\\\vdots\\A^T\end{bmatrix}
    ,
    \mathcal B_T=
    \begin{bmatrix}
        B&0&\cdots&0\\
        AB&B&\cdots&0\\
        \vdots&\vdots&\ddots&\vdots\\
        A^{T-1}B&A^{T-2}B&\cdots&B
    \end{bmatrix}
    .
\]
Furthermore, define
$
    \bar Q:=I_T\otimes Q,\ 
    \bar R:=I_T\otimes R,\ 
    \bar q:=\mathbf 1_T\otimes q,\ 
    \bar r:=\mathbf 1_T\otimes r
$. 
After eliminating the state trajectory, the finite-horizon objective
becomes
\begin{align}
    J(U;x,\theta)
    ={}&
    \frac{1}{2}
    (\mathcal A_Tx+\mathcal B_TU)^\trans
    \bar Q
    (\mathcal A_Tx+\mathcal B_TU)
    \nonumber\\
    &+
    \bar q^\trans(\mathcal A_Tx+\mathcal B_TU)
    +\frac{1}{2}U^\trans\bar R U
    +\bar r^\trans U.
    \label{eq:lifted_objective}
\end{align}
The optimal control trajectory $U^\star(x;\theta)$ therefore satisfies
the first-order optimality condition
\begin{equation}
    \left(
        \bar R+\mathcal B_T^\trans\bar Q\mathcal B_T
    \right)U^\star
    +
    \mathcal B_T^\trans\bar Q\mathcal A_Tx
    +
    \mathcal B_T^\trans\bar q
    +
    \bar r
    =0.
    \label{eq:stacked_optimality}
\end{equation}

Since $Q\succeq 0$ and $R\succ 0$, the objective is strictly convex in $U$. Let $E_1=[I_m\;0]\in\R^{m\times Tm}$ extract the first action from a stacked control sequence. Then the first-action policy is
\begin{equation}
    \rhcaction{x}{\theta}=E_1U^\star(x;\theta)=M_\theta x+b_\theta,
    \label{eq:first_action_policy}
\end{equation}
where $M_\theta
    :=-E_1
    \left(\bar R+\mathcal B_T^\trans\bar Q\mathcal B_T\right)^{-1}
    \mathcal B_T^\trans\bar Q\mathcal A_T$ and $ b_\theta
    :=-E_1
    \left(\bar R+\mathcal B_T^\trans\bar Q\mathcal B_T\right)^{-1}
    \left(\mathcal B_T^\trans\bar q+\bar r\right)$.

%% file: sections/revision2_identifiability.tex
\section{Objective Identifiability from Observations}
\label{sec:identifiability}

Different objective parameters may generate the same receding-horizon behavior, making unique recovery impossible in some cases.
We derive a set of necessary and sufficient conditions for unique recovery, and characterize when additional observations provide new information. 

\subsection{Inferring General Quadratic Objective Terms}

\subsubsection{Necessary Conditions for Identifiability}

Before asking whether the objective can be uniquely recovered, we first examine whether the observations can contain enough information to distinguish its unknown parameters. The following proposition identifies two-dimensional limitations: one imposed by the number of observed actions and another imposed by the structure of the fixed-horizon policy itself.

\begin{proposition}[Dimensional necessary conditions]
\label{prop:dimensional_conditions}
Uniquely identifying a $p$-dimensional objective from $K$
first-action observations requires both $p \le Km$ and
$p \le m(n+1)$.
If $q=r=0$, the policy is linear and the latter condition becomes
$p\le mn$.
\end{proposition}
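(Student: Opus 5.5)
The plan is to show that the observation map $\theta\mapsto\mathcal O(\theta,\{x_t\})$ factors through spaces of dimension $Km$ and $m(n+1)$ (or $mn$). Injectivity on a $p$-dimensional parameter set then forces $p$ to be at most each of these dimensions. I will state the argument for smooth parameterizations: $\Theta$ contains an open subset of $\R^p$, since it is cut out by $Q\succeq0,\ R\succ0$ and $\theta^\star$ is generic. The map $\theta\mapsto(M_\theta,b_\theta)$ in \eqref{eq:first_action_policy} is smooth (indeed rational) there, because $\bar R+\mathcal B_T^\trans\bar Q\mathcal B_T\succ0$ is invertible. The dimension-counting tool is the following standard fact. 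If a smooth map $F$ from an open set of $\R^p$ into $\R^d$ with $d<p$ is given, it cannot be injective near a point of maximal rank. Its rank $r\le d<p$ is constant on an open neighborhood of that point, so by the constant rank theorem the level sets there are submanifolds of dimension $p-r>0$. Hence some $\theta\neq\theta^\star$ nearby has the same image.

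\textbf{First condition.} The observation map sends $\theta$ to the stacked first actions $(\rhcaction{x_t}{\theta})_{t=0}^{K-1}\in\R^{Km}$. It is therefore a smooth map $\R^p\supset\Theta\to\R^{Km}$. If $p>Km$, the fact above produces a continuum of parameters reproducing the same data, so identifiability fails. Hence $p\le Km$ is necessary.

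\textbf{Second condition.} By \eqref{eq:first_action_policy}, the observations depend on $\theta$ only through $(M_\theta,b_\theta)\in\R^{m\times n}\times\R^m\cong\R^{m(n+1)}$. Thus $\mathcal O(\cdot,\{x_t\})=G\circ\Phi$ with $\Phi(\theta)=(M_\theta,b_\theta)$ and $G$ the linear evaluation map $(M,b)\mapsto(Mx_t+b)_t$. Injectivity of $\mathcal O$ requires injectivity of $\Phi$, and this holds for every $K$, since more data cannot resolve parameters that yield the identical policy. Applying the fact to $\Phi$ gives $p\le m(n+1)$. When $q=r=0$, we have $b_\theta=0$ identically because $\bar q=\bar r=0$. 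The observations then factor through $M_\theta\in\R^{mn}$ alone, giving $p\le mn$.

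The main obstacle is making the dimension-counting step rigorous without assuming too much about the parameterization. Merely having $p>d$ does not rule out injectivity for arbitrary maps (space-filling or non-continuous maps), so I must invoke smoothness. I also need an open set in $\Theta$ on which the rank of $\Phi$ is constant, which holds on a dense open set of full-dimensional $\Theta$ because the rank is lower semicontinuous. If $\theta^\star$ sits on the boundary or at a rank-deficient point, I would instead argue the generic statement, or use the rank $r$ of $d\Phi$ near $\theta^\star$ directly via the constant rank theorem on an open set accumulating at $\theta^\star$. The paper likely intends the statement as this generic dimension count, and I would phrase the proof accordingly.
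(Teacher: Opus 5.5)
Your argument is correct and follows the paper's skeleton. The data lie in $\R^{Km}$, and they depend on $\theta$ only through $\Phi(\theta)=(M_\theta,b_\theta)\in\R^{m(n+1)}$ (through $M_\theta\in\R^{mn}$ when $q=r=0$), so any failure of injectivity of $\Phi$ is inherited by $\mathcal O$.

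The difference is the dimension-counting lemma. The paper uses the topological fact that no continuous injection exists from a $p$-dimensional set into $\R^d$ with $d<p$. This is invariance of domain: after composing with $\R^d\hookrightarrow\R^p$, the image of an open ball would be open in $\R^p$ yet lie in a proper subspace. It needs only continuity and a nonempty interior of $\Theta$. Your claim that smoothness is indispensable is therefore too strong for that conclusion. Space-filling curves are continuous surjections from lower to higher dimension, not injections from higher to lower, and only discontinuous maps evade the obstruction.

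What your constant-rank route buys is a sharper, pointwise conclusion. Every $\theta^\star$ in the open dense set where the ranks of $d\Phi$ and $d\mathcal O$ are locally constant has a positive-dimensional fiber. Hence $\{\theta\in\Theta:\mathcal O(\theta,\{x_t\}_{t=0}^{K-1})=\mathcal O(\theta^\star,\{x_t\}_{t=0}^{K-1})\}\neq\{\theta^\star\}$. This matters because the paper's identifiability definition is stated for a specific $\theta^\star$. Its ``equivalently, injective'' conflates a singleton fiber at one point with global injectivity; for example, $\theta\mapsto\|\theta\|_2^2$ has the singleton fiber $\{0\}$ but is not injective. Invariance of domain only shows that some pair of parameters is confused, whereas your argument shows that a generic $\theta^\star$ is non-identifiable. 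Your caveat about rank-deficient or boundary $\theta^\star$ is appropriate, and neither proof covers such points.

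One minor correction: that $\Theta$ contains an open subset of $\R^p$ is a property of the parameterization (for instance, some $\theta$ with $Q(\theta)\succ0$ and $R(\theta)\succ0$), not of the genericity of $\theta^\star$. Both proofs assume it implicitly when speaking of a $p$-dimensional objective.
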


\begin{proof}
An injective continuous map cannot go from a $p$-dimensional space into a space of dimension less than $p$. 
Fixing the data set $\dataset$, the image of the observation map $\observationmap(\theta,\{x_t\}_{t=0}^{K-1})$ is in $\mathbb R^{Km}$, so injectivity in $\theta$ requires $p\le Km$. 
Moreover, since we observe receding-horizon control data, every observation obeys the same first-action policy $u^0(x;\theta)=M_\theta x+b_\theta$ of \eqref{eq:first_action_policy} evaluated at a different state. 
If the map from $\parameter$ to $ (M_\parameter, b_\parameter)$ is not injective, then $\observationmap(\theta,\{x_t\}_{t=0}^{K-1})$ loses injectivity in $\parameter$. 
Since the pair $(M_\theta,b_\theta)\in\mathbb R^{m\times n}\times\mathbb R^m$ has $m(n+1)$ entries, 
to ensure the injectivity of $\theta\mapsto(M_\theta,b_\theta)$, we require $p\le m(n+1)$. 
When $q=r=0$, $b_\theta=0$ \eqref{eq:first_action_policy} gives $b_\theta=0$, so the policy is determined by $M_\theta$ alone, and the bound becomes $p\le mn$.
\end{proof}

The following example illustrates the failure of 
identifiability when the dimensional conditions are violated. In this case, the Jacobian of the observed first-action
predictions is necessarily rank deficient.
\begin{example}[Failure of the dimensional conditions]
Consider $n=m=1$ and a normalized objective parameter of
dimension $p=3$. Since
$\rhcaction{x}{\theta}=M_\theta x+b_\theta$, the Jacobian
of the predictions in \eqref{eq:first_action_policy} satisfies
\[
\frac{\partial}{\partial\theta}
\begin{bmatrix}
\rhcaction{x_0}{\theta}\\
\vdots\\
\rhcaction{x_{K-1}}{\theta}
\end{bmatrix}
=
\begin{bmatrix}
x_0 & 1\\
\vdots & \vdots\\
x_{K-1} & 1
\end{bmatrix}
\frac{\partial(M_\theta,b_\theta)}{\partial\theta}.
\]
Its rank is at most $\min\{K,2\}<p$, regardless of the
observed states. With $K=1$, the observation-count condition
$p\le Km$ fails; for any $K$, the policy-dimension condition
$p\le m(n+1)$ fails.
\end{example}

\begin{remark}[Linear costs can improve identifiability]
When $q=r=0$, the first-action policy is linear and has at most $mn$ degrees of freedom. Allowing nonzero linear cost terms $q$ or $r$ generally introduces an affine offset to the RHC policy, increasing the policy dimension to at most $m(n+1)$. The additional $m$ degrees of freedom can make unique identification possible for a structured parameterization that could not be identified from a purely linear policy. 
\end{remark}

\subsubsection{Exact Characterization of Identifiability}

The dimensional conditions above determine when exact recovery is impossible,
but they do not provide a sufficient condition for identifiability. We now
give an exact characterization of the set of objective parameters that can reproduce all observed first actions.

The following construction is independent of which cost terms are unknown, i.e.,
the parameter $\theta$ may represent any structured subset of $Q$, $R$, $q$,
and $r$.

Using the lifted notation, define the optimality residual $\mathcal F(U,x;\theta)
    :=\left(\bar R+\mathcal B_T^\trans\bar Q\mathcal B_T\right)U
    +\mathcal B_T^\trans\bar Q\mathcal A_Tx
    +\mathcal B_T^\trans\bar q+\bar r$, 
where the cost matrices and vectors are those specified by $\theta$. Because
$Q\succeq0$ and $R\succ0$, a control sequence is optimal if and only if
$\mathcal F(U,x;\theta)=0$.

Only first action is observed. For the observation $(x_t,u_t^{0\star})$, write a possible completion as $U_t(V_t)=
    [    u_t^{0\star\trans},  V_t^\trans     ]^\trans,
    \ V_t\in\mathbb R^{(T-1)m}$. 
The parameters consistent with all observations are
\begin{equation}
\begin{split}
    \mathcal C_K(\mathcal Y^\star_K)
    :=\bigcap_{t=0}^{K-1}
    \biggl\{\theta\in\Theta:\;&\exists V_t\text{ such that}\\[-1mm]
    &\mathcal F\bigl(U_t(V_t),x_t;\theta\bigr)=0
    \biggr\}.
    \label{eq:consistency_set}
\end{split}
\end{equation}
Since the data set $\mathcal{Y}^\star_K$ is fixed throughout, we abbreviate
$\mathcal{C}_K := \mathcal{C}_K(\mathcal{Y}^\star_K)$ whenever the data set is
clear from context. For $j \le K$, $\mathcal{C}_j$ denotes the set formed from the
first $j$ observations $\{(x_t,u^{0\star}_t)\}_{t=0}^{j-1}$ of the same trajectory,
so that $\mathcal{C}_{K+1} \subseteq \mathcal{C}_K \subseteq \cdots \subseteq \mathcal{C}_1$.
By construction, \(\mathcal C_K(\mathcal Y^\star)\) contains the true parameter \(\theta^\star\) and every admissible alternative that reproduces the observations. Identifiability therefore reduces to whether the observations exclude all such alternatives.
\begin{proposition}[Consistency characterization]
\label{prop:consistency_identifiability}
The normalized parameter $\theta^\star$ is identifiable from
$\mathcal Y^\star$ if and only if $ \mathcal C_K$ is a singleton.
\end{proposition}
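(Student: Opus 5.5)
The plan is to show that the consistency set coincides with the preimage set from the definition of identifiability:
\[
\mathcal C_K=\{\theta\in\Theta:\mathcal O(\theta,\{x_t\}_{t=0}^{K-1})=\mathcal O(\theta^\star,\{x_t\}_{t=0}^{K-1})\}.
\]
Once this equality holds, the statement follows at once. Indeed, by the identifiability definition, $\theta^\star$ is identifiable exactly when this preimage set equals $\{\theta^\star\}$. It remains only to note that $\theta^\star\in\mathcal C_K$, which the equality gives. Hence $\mathcal C_K$ equals $\{\theta^\star\}$ if and only if it is a singleton.

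The equality is proved by two inclusions, carried out one observation at a time. Throughout, $\mathcal O(\theta,\cdot)$ denotes the stacked vector of first actions $\rhcaction{x_t}{\theta}$, and $u_t^{0\star}=\rhcaction{x_t}{\theta^\star}$ by \eqref{eq:expert_demonstration}.

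\textbf{First inclusion.} Suppose $\theta\in\Theta$ satisfies $\rhcaction{x_t}{\theta}=u_t^{0\star}$ for every $t$. Take $U^\star(x_t;\theta)$, the unique optimizer of \eqref{eq:forward_rhc_problem}. It satisfies \eqref{eq:stacked_optimality}, that is, $\mathcal F(U^\star(x_t;\theta),x_t;\theta)=0$. Its first block is $u_t^{0\star}$, so choosing $V_t$ as the remaining $T-1$ blocks gives $U_t(V_t)=U^\star(x_t;\theta)$. This witnesses membership in each set of the intersection \eqref{eq:consistency_set}, so $\theta\in\mathcal C_K$.

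\textbf{Second inclusion.} Suppose $\theta\in\mathcal C_K$, and fix $t$ with a witness $V_t$ satisfying $\mathcal F(U_t(V_t),x_t;\theta)=0$. Since $\theta\in\Theta$, we have $R\succ0$ and $Q\succeq0$. The Hessian $\bar R+\mathcal B_T^\trans\bar Q\mathcal B_T$ is then positive definite, so $J(\cdot;x_t,\theta)$ is strictly convex. Its stationarity condition $\mathcal F=0$ therefore has exactly one solution, namely $U^\star(x_t;\theta)$. Hence $U_t(V_t)=U^\star(x_t;\theta)$. Applying $E_1$ gives $\rhcaction{x_t}{\theta}=u_t^{0\star}=\rhcaction{x_t}{\theta^\star}$, and since this holds for every $t$, $\mathcal O(\theta,\cdot)=\mathcal O(\theta^\star,\cdot)$.

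The main point needing care is the second inclusion. There, stationarity must imply global optimality and uniqueness, and this rests on the positive definiteness of $\bar R+\mathcal B_T^\trans\bar Q\mathcal B_T$ on all of $\Theta$, which follows from $\bar R\succ0$. A secondary point is the normalization $\operatorname{tr}(R)=1$. It is built into the parametrization, so distinct $\theta$ cannot differ merely by scaling, and the singleton statement is meaningful.
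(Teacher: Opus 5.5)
Your proposal is correct and follows the paper's proof. Like the paper, you use $\bar R+\mathcal B_T^\trans\bar Q\mathcal B_T\succ0$ on $\Theta$ to make $U^\star(x_t;\theta)$ the unique root of $\mathcal F(\cdot,x_t;\theta)$, so a completion exists exactly when $u^0(x_t;\theta)=u_t^{0\star}$; this identifies $\mathcal C_K$ with the preimage set in the definition of identifiability. Your explicit observation that $\theta^\star\in\mathcal C_K$, so that a singleton $\mathcal C_K$ must equal $\{\theta^\star\}$, spells out a step the paper leaves implicit.
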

\begin{proof}
Since $Q\succeq0, R\succ 0$, the optimal sequence from $x_t$ under $\theta$ is unique and is the only solution of $\mathcal F(\,\cdot\,,x_t;\theta)=0$. Hence there exists a completion $U_t(V_t)$ satisfying $\mathcal F(U_t(V_t),x_t;\theta)=0$ if and only if $u^0(x_t;\theta)=u^{0\star}_t$. Intersecting over $t$ gives $\mathcal C_K=\{\theta\in\Theta:\mathcal O(\theta,\{x_t\}_{t=0}^{K-1})=\mathcal Y^\star\}$, and the claim follows from Definition~\ref{identifiability}.
\end{proof}
Proposition~\ref{prop:consistency_identifiability} turns identifiability into a question about the solution set of the consistency equations. Since each residual $\mathcal F(U_t(V_t),x_t;\theta)=0$ is bilinear in $\theta$ and $V_t$, these equations form a polynomial system in $(\theta,V_0,\ldots,V_{K-1})$. When $\mathcal C_K$ is finite, its members can therefore be enumerated numerically and the singleton test applied directly.\footnote{\label{fn:homotopy}We use \texttt{HomotopyContinuation.jl} \cite{BreidingTimme2018} to solve the polynomial systems. Among the returned solutions for $(\theta,V_0,\ldots,V_{K-1})$ we keep the $\theta$-components that are real, satisfy the normalization, and yield $Q(\parameter)\succeq0$ and $R(\parameter)\succ0$; each is an isolated member of $\mathcal C_K$.} When $\mathcal C_K$ is not finite, the consistency equations instead describe a continuum of indistinguishable objectives. The next subsection treats a case where this continuum can be characterized in closed form.

\subsection{Special Case: Inferring Only Linear Objective Terms}

Suppose $Q$ and $R$ are known and only $(q,r)$ is unknown. We write
$\rhcaction{x}{q,r}$ for the resulting first action. 
In this case, the gain $M_\theta$ in \eqref{eq:first_action_policy} is a fixed matrix $M$, while $b_\theta$ is linear in $(q,r)$. Hence
\begin{equation}
    \rhcaction{x}{q,r}
    =Mx+L
    \begin{bmatrix}q\\r\end{bmatrix},
    \label{eq:affine_cost_map}
\end{equation}
where
$    L=-E_1
    \left(\bar R+\mathcal B_T^\trans\bar Q\mathcal B_T\right)^{-1}
    \begin{bmatrix}
        \mathcal B_T^\trans(\mathbf1_T\otimes I_n)
        &\mathbf1_T\otimes I_m
    \end{bmatrix}.$

Thus $L$ is precisely the sensitivity of the common offset $b_\theta$ to the joint linear-cost vector $[q^\trans,r^\trans]^\trans$. We formalize the identifiability result in the following proposition. 

\begin{proposition}[Linear-cost and identifiable subspace]
\label{prop:affine_cost_identifiability}
For every $K\ge1$, $\mathcal C_K=[q^{\star\trans} ,\ r^{\star\trans}]^\trans +\ker(L)$. 
Consequently, no parameter is identifiable under unrestricted joint $(q,r)$ inference, because $L\in\mathbb R^{m\times(n+m)}$. If only $q$ or only $r$ is unknown, identifiability is determined by the column rank of the corresponding block of $L$. Moreover, the identifiable parameter subspace for joint $(q,r)$ inference is $\mathcal I_{qr}:=\ker(L)^\perp=\operatorname{im}(L^\trans)\subseteq \mathbb R^{n+m}$.
\end{proposition}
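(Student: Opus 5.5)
The argument follows from Proposition~\ref{prop:consistency_identifiability} combined with the affine representation \eqref{eq:affine_cost_map}. In the proof of Proposition~\ref{prop:consistency_identifiability} we showed that $\mathcal C_K=\{\theta\in\Theta:\rhcaction{x_t}{\theta}=u_t^{0\star},\ t=0,\ldots,K-1\}$. Here $Q$ and $R$ are known and fixed, so the feasibility constraints $Q\succeq0$ and $R\succ0$ hold automatically. The normalization $\operatorname{tr}(R)=1$ involves only the known $R$. Hence $\Theta$ is all of $\R^{n+m}$ in the variable $\theta=[q^\trans,r^\trans]^\trans$. Since $u_t^{0\star}=Mx_t+L\theta^\star$, substituting \eqref{eq:affine_cost_map} turns each consistency condition into
\[
Mx_t+L\theta=Mx_t+L\theta^\star \iff L(\theta-\theta^\star)=0 .
\]
This condition does not depend on $x_t$. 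Every observation therefore imposes the same constraint, and intersecting over $t$ gives $\mathcal C_K=\theta^\star+\ker(L)$ for every $K\ge1$.

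The remaining claims are linear algebra. First, $L$ has $m$ rows and $n+m$ columns. By rank--nullity, $\dim\ker(L)\ge n+m-m=n\ge1$, so $\mathcal C_K$ is an affine subspace of positive dimension. It is never a singleton, and by Proposition~\ref{prop:consistency_identifiability} no $\theta^\star$ is identifiable under joint $(q,r)$ inference.

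Next, suppose only $q$ is unknown, with $r=r^\star$ known. Write $L=[L_q\;L_r]$ with $L_q\in\R^{m\times n}$ and $L_r\in\R^{m\times m}$. The same argument then gives $\mathcal C_K=q^\star+\ker(L_q)$. This set is a singleton if and only if $L_q$ has full column rank $n$, which in particular requires $n\le m$. The case where only $r$ is unknown is symmetric, with $\mathcal C_K=r^\star+\ker(L_r)$.

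Finally, consider the identifiable subspace. Decompose $\theta=\theta_\parallel+\theta_\perp$ with $\theta_\parallel\in\ker(L)^\perp$ and $\theta_\perp\in\ker(L)$. Every element of $\mathcal C_K$ has the same component $\theta^\star_\parallel$, because two members differ by an element of $\ker(L)$. The component in $\ker(L)$, by contrast, is completely unconstrained. So the data determine exactly the orthogonal projection of $\theta^\star$ onto $\ker(L)^\perp$. The fundamental theorem of linear algebra gives $\ker(L)^\perp=\operatorname{im}(L^\trans)$.

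The only delicate step is confirming that $\Theta$ imposes no additional constraint in the linear-cost setting. If it did, $\mathcal C_K$ would be a proper subset of the affine space rather than the whole of it. This is settled by observing that $\Theta$ is defined only through $Q$ and $R$, which are fixed here, so no other step poses a difficulty.
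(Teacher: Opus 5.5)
Your proposal is correct and follows essentially the same route as the paper. Like the paper, you substitute the affine map $Mx_t+L\theta$ to obtain the state-independent condition $L(\theta-\theta^\star)=0$, then apply rank--nullity and the column-rank test on the relevant block of $L$. You also make explicit two points the paper leaves implicit: the check that $\Theta$ is all of $\mathbb{R}^{n+m}$ when $Q$ and $R$ are known, and the orthogonal-decomposition argument for $\mathcal I_{qr}=\ker(L)^\perp=\operatorname{im}(L^\trans)$.
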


\begin{proof}
Let $z=[q^\trans,r^\trans]^\trans$ and $z^\star=[q^{\star\trans},r^{\star\trans}]^\trans$. By \eqref{eq:affine_cost_map}, a candidate $z$ matches an observed first action at any state $x_t$ if and only if $Mx_t+Lz=Mx_t+Lz^\star$, 
or equivalently $L(z-z^\star)=0$. This condition is independent of $x_t$, so every nonempty collection of observations yields
$\mathcal C_K=z^\star+\ker(L)$.
Since $L$ has $m$ rows and $n+m$ columns, rank-nullity gives $\dim\ker(L)\ge n>0$, precluding unique joint recovery. If only one block is unknown, the same argument applied to the corresponding block of $L$ shows that recovery is unique exactly when that block has full column rank.
\end{proof}





%% file: sections/ObservationNumbers.tex
\section{Information Gain from Additional Observations}

Longer trajectories are useful only while their observations continue to reduce parameter ambiguity. We characterize when a new observation is redundant, i.e., $\mathcal C_{K+1}=\mathcal C_K$. The answer depends on the geometry of the observed states and on which cost terms are unknown. For $K\ge1$, define the observed affine hull
\begin{multline}
    \affhull{x_0,\ldots,x_{K-1}}
    =\\x_0+\operatorname{span}\{x_1-x_0,\ldots,x_{K-1}-x_0\}.
    \label{eq:observed_affine_hull_span}
\end{multline}

\begin{proposition}[No information gain]
\label{prop:no_gain}
Let $\mathcal C_K$ and $\mathcal C_{K+1}$ be computed under $K$ and
$K+1$ first-action observations, respectively. Suppose one of the following conditions holds:
\begin{enumerate}[label=(\roman*)]
  \item $x_K\in\affhull{x_0,\ldots,x_{K-1}}$;
  \item if $q=r=0$ throughout~$\Theta$, $x_K\in\operatorname{span}\{x_0,\ldots,x_{K-1}\}$;
\end{enumerate}
Then we have $\mathcal C_{K+1}=\mathcal C_K$, i.e., an additional observation cannot refine the inferred objective parameter~set. Moreover, when only the linear objective term $(q,r)$ is unknown, $\mathcal C_K=\mathcal C_1$ for all $K\ge1$.
\end{proposition}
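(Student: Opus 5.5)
The plan is to use the reformulation from the proof of Proposition~\ref{prop:consistency_identifiability}. There, $\mathcal C_K=\{\theta\in\Theta:\ \rhcaction{x_t}{\theta}=u_t^{0\star},\ t=0,\ldots,K-1\}$. By construction $\mathcal C_{K+1}\subseteq\mathcal C_K$, so it suffices to show the reverse inclusion. Fix $\theta\in\mathcal C_K$; we must show $\rhcaction{x_K}{\theta}=u_K^{0\star}=\rhcaction{x_K}{\theta^\star}$. The key tool is the affine structure \eqref{eq:first_action_policy}: for every $\theta\in\Theta$, the map $x\mapsto\rhcaction{x}{\theta}=M_\theta x+b_\theta$ is affine. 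Hence the difference $\Delta(x):=\rhcaction{x}{\theta}-\rhcaction{x}{\theta^\star}=(M_\theta-M_{\theta^\star})x+(b_\theta-b_{\theta^\star})$ is itself an affine function of $x$, and it vanishes at $x_0,\ldots,x_{K-1}$ because $\theta\in\mathcal C_K$.

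For case (i), write $x_K=x_0+\sum_{i=1}^{K-1}\alpha_i(x_i-x_0)$, which is possible by \eqref{eq:observed_affine_hull_span}; equivalently $x_K=\sum_{i=0}^{K-1}\lambda_i x_i$ with $\sum_i\lambda_i=1$. Since affine maps preserve affine combinations, $\Delta(x_K)=\sum_i\lambda_i\Delta(x_i)=0$. Spelling this out, the offset term contributes $(\sum_i\lambda_i)(b_\theta-b_{\theta^\star})$, and the weights summing to one is exactly what makes it cancel against the offsets at the observed points. Thus $\theta\in\mathcal C_{K+1}$.

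For case (ii), $q=r=0$ forces $b_\theta=0$ for all $\theta\in\Theta$, as already noted in the proof of Proposition~\ref{prop:dimensional_conditions}. So $\Delta(x)=(M_\theta-M_{\theta^\star})x$ is linear. It vanishes on $x_0,\ldots,x_{K-1}$, hence on their span, and therefore at $x_K$. No affine constraint on the coefficients is needed here.

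For the final claim, we invoke Proposition~\ref{prop:affine_cost_identifiability}, which gives $\mathcal C_K=z^\star+\ker(L)$ for every $K\ge1$. This set is independent of $K$ and of the observed states, so $\mathcal C_K=\mathcal C_1$. There is one subtlety: membership in $\Theta$ must be handled consistently, which is fine because $Q$ and $R$ are fixed, so feasibility does not depend on $(q,r)$.

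The main obstacle is mild. It is to make sure the argument uses only the single affine policy representation valid uniformly in $\theta$. Unlike the polynomial consistency system, where one might worry about the latent completions $V_t$, the reduction to first-action equality via Proposition~\ref{prop:consistency_identifiability} removes the $V_t$ entirely, so the whole proof is linear algebra on $\Delta$.
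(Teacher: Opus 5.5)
Your proposal is correct and follows essentially the same route as the paper: you reduce consistency to first-action agreement via uniqueness of the optimal sequence, then use that $u^0(x;\theta)-u^0(x;\theta^\star)$ is affine in $x$ (linear when $b_\theta\equiv 0$) and vanishes at the observed states, hence on their affine hull (respectively their span). For the final claim you cite Proposition~\ref{prop:affine_cost_identifiability}, whereas the paper re-derives the same fact inline (with fixed $M$ the policy difference is a constant, which agreement at $x_0$ forces to zero); the two arguments are equivalent.
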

\begin{proof}
$\mathcal C_{K+1} \subseteq \mathcal C _K$ holds trivially; it suffices to show the reverse inclusion $\mathcal C_K\subseteq\mathcal C_{K+1}$, i.e., every $\theta$ consistent with the first $K$ observations is also consistent with $(x_K,u^{0\star}_K)$.

By \eqref{eq:consistency_set}, $\theta$ is consistent with the new observation
$(x_K,u^{0\star}_K)$ if and only if some control sequence starting with
$u^{0\star}_K$ is optimal under $\theta$ from $x_K$. Because the optimal
sequence is unique, this is the same as requiring its first entry to be
$u^{0\star}_K$, that is,$    u^0(x_K;\theta)=u^0(x_K;\theta^\star).$
Hence $\mathcal C_K\subseteq\mathcal C_{K+1}$ holds if agreement at the
observed states, $u^0(x_t;\theta)=u^0(x_t;\theta^\star)$ for $t<K$, implies
agreement at $x_K$. 

We verify this in each case,
using that every first-action policy has the affine form
$u^0(x;\theta)=M_\theta x+b_\theta$.

\emph{(i)} By \eqref{eq:observed_affine_hull_span}, $x_K\in\affhull{x_0,\ldots,x_{K-1}}$ means $x_K=x_0+\sum_{t=1}^{K-1}\mu_t(x_t-x_0)$ for some $\mu_1,\ldots,\mu_{K-1}\in\mathbb R$. 
Then $u^0(x_K;\theta)
= M_\theta x_0+b_\theta+\sum_{t=1}^{K-1}\mu_t M_\theta(x_t-x_0)= u^0(x_0;\theta)+\sum_{t=1}^{K-1}\mu_t\bigl(u^0(x_t;\theta)-u^0(x_0;\theta)\bigr).$
Since $\theta\in\mathcal C_K$, we have $u^0(x_t;\theta)=u^0(x_t;\theta^\star)$ for every $t<K$, and therefore $u^0(x_K;\theta)=u^0(x_K;\theta^\star)$.

\emph{(ii)} If $q=r=0$ throughout $\Theta$, then by \eqref{eq:first_action_policy} $b_\theta=0$ for every $\theta$, so $u^0(x;\theta)=M_\theta x$ is linear in $x$. Then for any $x_K=\sum_{t=0}^{K-1}\mu_t x_t\in\operatorname{span}\{x_0,\ldots,x_{K-1}\}$, with no restriction on the $\mu_t$, $u^0(x_K;\theta)=\sum_{t=0}^{K-1}\mu_t u^0(x_t;\theta)$, and likewise for $\theta^\star$. The argument of (i) applies verbatim: agreement at $x_0,\ldots,x_{K-1}$ gives $u^0(x_K;\theta)=u^0(x_K;\theta^\star)$.

If only $(q,r)$ is unknown, then by \eqref{eq:affine_cost_map} the gain is the same fixed matrix $M$ for every $\theta$, so $u^0(x;\theta)-u^0(x;\theta^\star)=b_\theta-b_{\theta^\star}$ is a constant independent of $x$. Since $\theta\in\mathcal C_K$ requires agreement at $x_0$, this constant is zero, and the policies agree at every state, in particular at $x_K$. As this holds for every $K\ge1$ and every $x_K$, induction gives $\mathcal C_K=\mathcal C_1$.
\end{proof}
\begin{remark}
\label{rem:no_gain_practice}
Proposition~\ref{prop:no_gain} has two practical consequences.
\emph{1)~Efficient inference:} an observation whose state lies in the
affine hull of the previous states (the span in case (ii), any state in case
(iii)) does not change $\mathcal C_K$, so it can be dropped and the
consistency equations \eqref{eq:consistency_set} formed only over the
remaining observations; in the linear-cost case one observation is enough.
\emph{2)~Data collection:} only a state outside the current affine hull can
shrink $\mathcal C_K$, so new observations should be taken at such states.
\end{remark}

Proposition~\ref{prop:no_gain} concerns what the data reveal about the
parameter. The next section asks the converse: what a consistent parameter,
identifiable or not, predicts about the expert at states never observed. The
same regions reappear, now as the states on which prediction is exact.

%% file: sections/revision2_Generalization.tex
\section{Generalization of Action Prediction to Unseen States}
\label{sec:generalization}

Identifiability concerns recovery of the objective parameters, whereas
generalization concerns recovery of the induced behavior: a parameter may fail
to be identifiable yet still produce the correct first action at unseen
states. 

We first characterize the states on which every consistent parameter is guaranteed to generalize and then derive a prediction-error bound for the remaining states.

\begin{proposition}[Exact generalization]
\label{prop:exact_generalization}
Let $\theta\in\mathcal C_K$ be any parameter consistent with
the past observations. Then the following hold.
\begin{enumerate}
  \item[(i)] For any parameterization $\Theta$, $\theta$ predicts the
        expert's first action exactly at every state in the affine hull of
        the observed states:
        $\rhcaction{x}{\theta}=\rhcaction{x}{\theta^\star}$ for all
        $x\in\affhull{x_0,\ldots,x_{K-1}}$.
  \item[(ii)] If $q=r=0$ throughout $\Theta$, the guarantee extends to the
        linear span of the observed states:
        $\rhcaction{x}{\theta}=\rhcaction{x}{\theta^\star}$ for all
        $x\in\operatorname{span}\{x_0,\ldots,x_{K-1}\}$.
  \item[(iii)] When only the linear objective term$(q,r)$ is unknown, the guarantee extends to the whole
        state space: $\rhcaction{x}{\theta}=\rhcaction{x}{\theta^\star}$ for
        all $x\in\R^n$. 
\end{enumerate}
\end{proposition}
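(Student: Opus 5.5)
The plan is to reduce everything to the affine structure of the first-action policy in \eqref{eq:first_action_policy}, exactly as in the proof of Proposition~\ref{prop:no_gain}. First, I will use Proposition~\ref{prop:consistency_identifiability} (its proof) to translate membership $\theta\in\mathcal C_K$ into pointwise agreement at the data. Since the optimal sequence under any $\theta\in\Theta$ is unique, $\theta\in\mathcal C_K$ holds if and only if
\[
\rhcaction{x_t}{\theta}=u_t^{0\star}=\rhcaction{x_t}{\theta^\star}\quad\text{for all } t<K.
\]
The whole proof then rests on one fact. The difference
\[
\delta(x):=\rhcaction{x}{\theta}-\rhcaction{x}{\theta^\star}=(M_\theta-M_{\theta^\star})x+(b_\theta-b_{\theta^\star})
\]
is an affine function of $x$ that vanishes at $x_0,\ldots,x_{K-1}$.

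For (i), I will take an arbitrary $x\in\affhull{x_0,\ldots,x_{K-1}}$ and write it via \eqref{eq:observed_affine_hull_span} as $x=x_0+\sum_{t=1}^{K-1}\mu_t(x_t-x_0)$. Because affine maps preserve affine combinations,
\[
\delta(x)=\delta(x_0)+\sum_{t=1}^{K-1}\mu_t\bigl(\delta(x_t)-\delta(x_0)\bigr)=0.
\]
This is the same computation as case (i) of Proposition~\ref{prop:no_gain}, now applied to an arbitrary query state rather than to the next observation $x_K$.

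For (ii), I will note that when $q=r=0$ the offsets satisfy $b_\theta=b_{\theta^\star}=0$, so $\delta$ is linear. Then any $x=\sum_t\mu_t x_t$ in the span gives $\delta(x)=\sum_t\mu_t\delta(x_t)=0$.

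For (iii), I will use \eqref{eq:affine_cost_map}: the gain $M$ is common to all candidates, so $\delta(x)=L(z-z^\star)$ is constant in $x$. Proposition~\ref{prop:affine_cost_identifiability} gives $z-z^\star\in\ker(L)$ for any $\theta\in\mathcal C_K$, since $\mathcal C_K=z^\star+\ker(L)$. Hence $\delta\equiv0$ on $\R^n$. Equivalently, vanishing at the single point $x_0$ forces the constant $\delta$ to vanish.

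I do not expect a genuine obstacle. The only care point is the first step, where I must use the uniqueness of the optimal plan (from $Q\succeq0$, $R\succ0$) so that the existence of a completion $V_t$ in \eqref{eq:consistency_set} really pins down the first action. Once that is in place, each item reduces to a one-line linear-algebra identity.
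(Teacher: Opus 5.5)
Your proposal is correct and follows essentially the same route as the paper. The paper's proof simply observes that the affine, linear, and constant-difference arguments in the proof of Proposition~\ref{prop:no_gain} never used that the query state was an observed one; your invocation of $\mathcal C_K=z^\star+\ker(L)$ in (iii) is an equivalent shortcut, and you also give the paper's argument that the constant difference vanishes at $x_0$.
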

\begin{proof}
The proof of Proposition~\ref{prop:no_gain} shows that if
$u^0(x_t;\theta)=u^0(x_t;\theta^\star)$ for $t=0,\ldots,K-1$, then
$u^0(x;\theta)=u^0(x;\theta^\star)$ for every $x$ in the subspace of the
corresponding case; nothing in that argument required $x$ to be an observed
state. Since $\theta\in\mathcal C_K$ agrees with
$\theta^\star$ at $x_0,\ldots,x_{K-1}$, the claims (i)--(iii) follow.
\end{proof}

Outside the observed affine hull the error need not vanish, but it is
controlled by the distance to the hull and by the residual ambiguity in the
gain. Let
$D:=\begin{bmatrix}x_1-x_0&\cdots&x_{K-1}-x_0\end{bmatrix}\in\R^{n\times(K-1)}$
and let $P_D:=DD^\dagger$ be the orthogonal projector onto
$\operatorname{im}(D)$, where $D^\dagger$ is the Moore--Penrose pseudoinverse.

\begin{proposition}[Off-hull generalization bound]
\label{prop:off_hull_error_bound}
Let $\theta\in\mathcal C_K$ be any parameter consistent with
the past observations. For all $x\in\R^n$, we have
$\|\rhcaction{x}{\theta}-\rhcaction{x}{\theta^\star}\|_2
 \le\|M_\theta - M_{\theta^\star}\|_2\cdot\|(I-P_D)(x-x_0)\|_2.$
\end{proposition}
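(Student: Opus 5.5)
The plan is to work directly with the affine form $\rhcaction{x}{\theta}=M_\theta x+b_\theta$ from \eqref{eq:first_action_policy}. We write the prediction error as the affine map
\[
e(x):=\rhcaction{x}{\theta}-\rhcaction{x}{\theta^\star}=\Delta M\,x+\Delta b,
\qquad \Delta M:=M_\theta-M_{\theta^\star},\quad \Delta b:=b_\theta-b_{\theta^\star}.
\]
Because $\theta\in\mathcal C_K$, the argument in the proof of Proposition~\ref{prop:consistency_identifiability} gives $e(x_t)=0$ for $t=0,\ldots,K-1$. The first step uses $e(x_0)=0$ to eliminate the offset. It gives $\Delta b=-\Delta M x_0$, and hence
\[
e(x)=\Delta M\,(x-x_0)\quad\text{for all } x\in\R^n .
\]

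The second step shows that $\Delta M$ annihilates $\operatorname{im}(D)$. Subtracting $e(x_0)=0$ from $e(x_t)=0$ gives $\Delta M(x_t-x_0)=0$ for $t=1,\ldots,K-1$. Therefore $\Delta M D=0$, and $\Delta M$ vanishes on $\operatorname{im}(D)$. This is the same fact behind Proposition~\ref{prop:exact_generalization}(i).

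The third step decomposes the displacement orthogonally. We write $x-x_0=P_D(x-x_0)+(I-P_D)(x-x_0)$, where $P_D=DD^\dagger$ is the orthogonal projector onto $\operatorname{im}(D)$. Since $P_D(x-x_0)\in\operatorname{im}(D)$, the second step gives $\Delta M P_D(x-x_0)=0$. Hence
\[
e(x)=\Delta M\,(I-P_D)(x-x_0).
\]
Submultiplicativity of the induced $2$-norm then yields $\|e(x)\|_2\le\|\Delta M\|_2\,\|(I-P_D)(x-x_0)\|_2$, which is the claimed bound.

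No step is a real obstacle; the argument is a direct calculation. The only care needed is the degenerate case $K=1$. There $D$ is an empty $n\times 0$ matrix, so $P_D=0$, and the bound reduces to $\|\Delta M(x-x_0)\|_2\le\|\Delta M\|_2\|x-x_0\|_2$, which still follows from the first step. One could note afterwards that the bound is tight only in direction, since it uses the operator norm of $\Delta M$ restricted to $\operatorname{im}(D)^\perp$. In fact it could be sharpened to $\|\Delta M(I-P_D)\|_2$, but the stated form is what we prove.
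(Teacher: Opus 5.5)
Your proof is correct and follows the paper's argument step for step: use $e(x_0)=0$ to eliminate the offset, deduce $(M_\theta-M_{\theta^\star})D=0$ and hence $(M_\theta-M_{\theta^\star})P_D=0$, then conclude by submultiplicativity. One small correction to your closing aside: since $\Delta M P_D=0$, you have $\Delta M(I-P_D)=\Delta M$, so $\|\Delta M(I-P_D)\|_2=\|\Delta M\|_2$ and the proposed ``sharpening'' is no tighter than the stated bound.
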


\begin{proof}
By \eqref{eq:first_action_policy}, the difference of the two policies is
$(M_\theta-M_{\theta^\star})\,x+(b_\theta-b_{\theta^\star})$, and consistency
makes it zero at $x_0,\ldots,x_{K-1}$. Evaluating at $x_0$ gives
$b_\theta-b_{\theta^\star}=-(M_\theta-M_{\theta^\star})x_0$, so the difference
equals $(M_\theta-M_{\theta^\star})(x-x_0)$ for every $x$. Evaluating at
$x_t$, $t=1,\ldots,K-1$, then gives $(M_\theta-M_{\theta^\star})(x_t-x_0)=0$,
i.e., $(M_\theta-M_{\theta^\star})D=0$, and hence
$(M_\theta-M_{\theta^\star})P_D=(M_\theta-M_{\theta^\star})DD^\dagger=0$.
Therefore
$\rhcaction{x}{\theta}-\rhcaction{x}{\theta^\star}
=(M_\theta-M_{\theta^\star})(I-P_D)(x-x_0)$, and taking Euclidean norms with
submultiplicativity gives the bound.
\end{proof}
Together, the two propositions established prediction guarantee: Inside the observed affine hull every consistent parameter
reproduces the expert exactly; outside it the error grows at most linearly
with the distance to the hull.

%% file: sections/revision1_NumericalResults.tex
\section{Numerical Results}
\label{sec:numerical}
We evaluate whether an inferred objective can reliably predict actions at unseen states when observation data is noisy. We compare inverse receding-horizon LQR with two baselines. Comparison with direct policy fitting, i.e., least-squares regression of the observed actions on the observed states as in~\cite{recht2019tour}, evaluates whether retaining and re-optimizing an objective improves prediction under horizon changes. Comparison with finite-horizon inverse LQR~\cite{9029795} evaluates whether correctly modeling the receding-horizon structure of the demonstrations improves objective inference. 

\begin{figure*}[t!]
    \centering
    \includegraphics[width=1.0\textwidth]
        {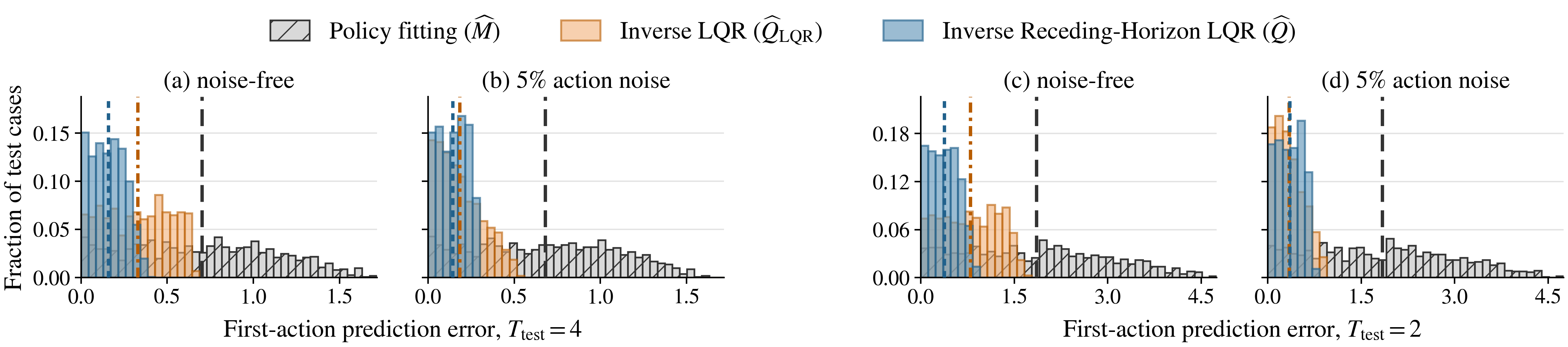}\vspace{-2em}
        \caption{\textbf{Replanning with an objective inferred by inverse receding-horizon LQR yields
low action-prediction error at unseen states under a changed horizon,
even with noisy observations.} Absolute first-action error on $1{,}000$
unseen states for policy fitting ($\widehat{M}$, gray hatched), inverse
LQR ($\widehat Q_{\mathrm{LQR}}$, orange), and inverse receding-horizon LQR
($\widehat Q$, blue), transferring from training horizon $10$ to test
horizon $4$ and $2$, with (a,c) exact observations and (b,d)
Gaussian action noise with standard deviation $5\%$ of the noiseless
action RMS. Dashed lines mark mean errors.
        }
    \label{fig:cross-horizon-generalization}
\end{figure*}
\noindent\textbf{Experimental setup.} We consider system \eqref{eq:system_dynamics} with
\begin{equation*}
    A = \begin{bmatrix}1 & 0.2\\0 & 1\end{bmatrix}\hspace{-0.1em},\hspace{-0.1em}
    B = \begin{bmatrix}0.02\\0.2\end{bmatrix}\hspace{-0.1em},\hspace{-0.1em}
    Q^\star = \begin{bmatrix}2 & 0.25\\0.25 & 0.8\end{bmatrix}\hspace{-0.1em},\hspace{-0.1em}
    R = 0.05,
\end{equation*}
where $R$ and $q=r=0$ are known and only $Q^\star$ is to be inferred. The expert uses planning horizon
$T_{\mathrm{train}}=10$, and we collect a single rollout of
$K=8$ state-action pairs from $x_0=(2,-0.5)$. For noisy observations,
each recorded action is perturbed by independent zero-mean Gaussian
noise with standard deviation equal to $5\%$ of the root mean square (RMS) noiseless action; the states and rollout remain noiseless.

All methods use the same observations. Inverse receding-horizon LQR minimizes
\eqref{eq:inverse_rhc_problem} with L-BFGS
\cite{zhang2024inverse,11312871}, using the closed-form first-action map \eqref{eq:first_action_policy} and its
gradient.
Inverse LQR treats the
$K$ observations as one open-loop LQ trajectory of horizon $N=K$,
matching each observation with the time-varying Riccati gain associated
with the remaining horizon rather than the fixed first-step gain of the
receding-horizon controller. Policy fitting estimates the
linear feedback gain by least squares,
$\widehat M = UX^\dagger$, where $X$ and $U$ stack the observed states
and actions column-wise.


\noindent\textbf{Results. }Fig.~\ref{fig:cross-horizon-generalization} shows that inverse receding-horizon LQR yields mean errors of $0.16$ and $0.38$ at $T_{\mathrm{test}}=4$ and $2$, compared with $0.70$ and $1.86$ for policy fitting, despite $\|\widehat Q-Q^\star\|_F=0.24$. Inverse LQR gives $0.33$ and $0.79$ and cannot fit the receding-horizon demonstrations exactly (training RMSE $0.07$), highlighting the effect of planning-model mismatch. With $5\%$ action noise, inverse receding-horizon LQR remains accurate ($0.14$, $0.35$) and outperforms policy fitting ($0.68$, $1.84$), while its gap to inverse LQR ($0.18$, $0.34$) narrows as noise masks the planning-model mismatch.

%% file: sections/conclusion.tex
\section{Conclusion}
An inferred objective is useful not only for explaining observed actions, but also for predicting how decisions change beyond the observed setting. Our results clarify when receding-horizon behavior provides enough information to recover objective parameters and when the recovered objective supports such predictions. The numerical comparison further illustrates the benefit of retaining the planning mechanism: an inferred cost can be reused when the planning horizon changes, whereas a directly fitted policy need not capture this dependence. Future work will extend the analysis to nonlinear dynamics and nonquadratic cost functions. 
